\documentclass[aps,pra,twocolumn]{revtex4-2}
\usepackage{amsmath,amsfonts,amssymb,amsthm}
\usepackage{mathtools}
\usepackage[pdftex]{graphicx}
\usepackage[usenames, dvipsnames, svgnames, table]{xcolor}
\usepackage[colorlinks=true, citecolor=Blue, linkcolor=BrickRed, urlcolor=Brown]{hyperref}
\usepackage{txfonts}
\usepackage{mathrsfs}
\usepackage{bm}
\usepackage{multirow}
\usepackage{braket}
\usepackage{import}
\usepackage{qcircuit}
\usepackage[ruled, noline, noend]{algorithm2e}
\usepackage{array}
\usepackage{comment}
\usepackage{tikz}
\usetikzlibrary{automata,arrows,positioning,calc}
\usepackage{float}


\newtheorem{lemma}{Lemma}

\newtheorem{Note}{Note}

\theoremstyle{definition}
\newtheorem{definition}{Def.}
\newcommand{\be}{\begin{equation}}
\newcommand{\ee}{\end{equation}}
\newcommand{\ben}{\begin{eqnarray}}
\newcommand{\een}{\end{eqnarray}}
\newcommand{\bes}{\begin{subequations}}
\newcommand{\ees}{\end{subequations}}
\newcommand{\bF}{\begin{figure}}
\newcommand{\eF}{\end{figure}}

\DeclareMathAlphabet{\pazocal}{OMS}{zplm}{m}{n}

\newcommand{\orcid}[1]{\href{https://orcid.org/#1}{\includegraphics[height = 2ex]{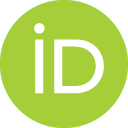}}}

\begin{document}

\title{Where Are All The Tourists From 3025?}

\author{Andrew Jackson \orcid{0000-0002-5981-1604}}
\affiliation{School of Informatics, University of Edinburgh, Edinburgh, EH8 9AB, United Kingdom}
\date{\today}


\begin{abstract}
     This paper examines the distinct lack of clear examples of time-travellers and proposes an explanation for their absence without assuming technical barriers to constructing time machines. Instead, it develops and then analyses a model of the consequences of time-travellers; finding that time travel is self-suppressing.
\end{abstract}

\maketitle
\section{Introduction}
\label{IntroSec}
On 28 June 2009, Stephen Hawking held a party for time-travellers~\cite{Walfisz_2023, Sutter_2023}.
Champagne, balloons, nibbles: this party had all the trappings of an era-defining party, hoping to entice time-travellers into making an appearance. 
However, nobody was invited to the event -- in advance -- and it was only publicly announced afterwards, with exact time and location details, in the hope that an invite would survive the centuries/millennia required to fall into the hands of a party-loving time-traveller.
Despite all the effort put into throwing and publicizing this party, ``no one came"~\cite{Venables2012}. Leaving Hawking to party alone.

Hawking was not the first to try such an experiment~\cite{Block_2005}, which is not too surprising as the experiment is both simple and intuitive: if time travel will \emph{ever} be invented, we should expect our great$^n$-grandkids to come back for a visit and such a visit would prove that time travel is technically and metaphysically possible.
But this has never happened, at least as far as is known~\footnote{If you are visiting from the future and reading this, I hope you had a good laugh at my expense.}. \emph{All} experiments along the same lines as Hawking's have failed to find time-travellers.

These negative results are strange as -- to my knowledge -- there are no explicit bans on time travel in our current understanding of physics~\cite{Blitz_2022, Frankel_2024} and, in fact, the Einstein field equations admit a solution, known as the Gödel metric, with closed-timelike-curves~\cite{RevModPhys.21.447}.
So if physics does not forbid it and, if the human population of the future is as large as seems reasonable to assume, it appears to be a problem that not one of the billions/trillions of future-humans has travelled back to our time and -- intentionally or otherwise -- made themselves known to us.

This leaves us to wonder, can the lack of known tourists from the future be interpreted as proving time travel to be impossible?

In this paper, I argue the contrary. Presenting a model where time travel is theoretically and technologically possible; the construction of time machines is physically, practically, logistically, financially, and politically possible; and nothing prevents those machines from travelling back to our time, but we do not see any travellers from the future.

\section{Preliminary Analysis and Model Setup}
The model I use to analyze time-travel as a phenomenon is derived from a full description of the many possible worlds~\cite{sep-possible-worlds} it is possible to be living in (i.e. the set of worlds that could have been \emph{the} world if things had been different). The worlds are each modeled by a full logical description of them via a -- potentially infinite -- set of logical statements that are true in that world. The worlds are then grouped into sets of worlds based on those that are the same in the regard I care about, the number of time machines that are ever constructed, and this gives rise to the true model I analyse herein. 

\subsection{Ontology of the Model: Timelines, Construction Numbers, and Macrostates}
The foundational notion of the model I now begin building is:\\

``The world is all that is the case."~\cite{AllIsCase}\\

Hence, all possible worlds~\cite{sep-possible-worlds} that I consider will be modeled as complete consistent sets~\cite{2024Complete} of logical statements. 
However, to obtain a viable model of different timelines and different worlds that could have occurred if the past was different, I have to differentiate between necessary and contingent logical statements~\footnote{The difference and relationship between these two kinds of logical statement is deep and beyond the scope of this paper but, regardless of how you choose to define them, the below analysis is the same}.

Necessary logical statements have to be true in every possible world (i.e. in all previously-mentioned complete consistent sets in my model), while the contingent logical statements are free to be true or not in every possible world (provided they are consistent with all necessary logical statements and all contingent logical statements in the same world).  

This distinction between the necessary and contingent truths is captured in the definition of all the possible worlds of the model (Def.~\ref{def:SetofMicros}). But presenting this definition first requires I define the universe of the logical statements that the possible worlds -- in the model used throughout this paper -- consist of, in Def.~\ref{def:Universe}. 
\begin{definition}
    \label{def:Universe}
    The set \underline{$\mathbb{U}$} is the set of all things that may contingently exist i.e. the set of all things that do not necessarily not exist (the set of all things that exist in \emph{some} possible world).
    Logical statements in the sets that constitute my model (i.e. the worlds in $\mathbb{W}$) may only refer to elements of $\mathbb{U}$.
\end{definition}
Then let $\mathbb{N}_{es}$ be the set of all logical statements deemed necessary and let $\mathbb{N}_{es}'$ be its deductive closure, I can then define the possible worlds of the model, in Def.~\ref{def:SetofMicros}.
\begin{definition}
    \label{def:SetofMicros}
    \underline{$\mathbb{W}$} is defined as the set:
    \begin{align}
        \big\{ \mathbb{N}_{es}' \cup \mathbb{C}_{on}  \text{ } \vert \text{ } \mathbb{N}_{es}' \cup \mathbb{C}_{on} \text{ is a complete consistent set over } \mathbb{U} \big\}.
    \end{align}
    \underline{$\mathbb{W}$} is the set of all possible worlds (each described as a set of logical statements) considered in the model.
\end{definition}
\subsection{Grouping Microstates into Macrostates}
\label{sec:Grouping}
In Sec.~\ref{sec:Grouping} I now simplify $\mathbb{W}$ by ignoring the aspects of each world that are irrelevant for my considerations herein. As I only care about the number of time machines constructed in a world. This reduces some worlds to be identical (up to irrelevant and ignored features) and I can effectively consider them as a single world (as defined in Def.~\ref{def:macrostates}).  

As the only feature of worlds I will be concerned with is the number of time machines constructed, \emph{all} worlds can be grouped into sets based on this integer value, which I term the construction number, defined in Def.~\ref{def:contrucNum}. 
\begin{definition}
    \label{def:contrucNum}
    For any timeline, the \underline{construction number of the timeline} is the number of individual time machines constructed in that timeline.    
\end{definition}
The construction number of the timeline enables me to define an equivalence relation on $\mathbb{W}$, with worlds being equivalent if they have the same construction number.
I define the equivalence classes corresponding to this equivalence relation, referred to as macrostates of $\mathbb{W}$, in Def.~\ref{def:macrostates}.
\begin{definition} 
    \label{def:macrostates}
    A \underline{macrostate} of $\mathbb{W}$ is a subset of  $\mathbb{W}$ defined by:
    \begin{align}
         \mathbb{M}_n = \big \{ x \in \mathbb{W} \text{ } \vert \text{ } \mathcal{C}(x) = n \big \},
    \end{align}
    where $n \in \mathbb{N}_0$ and $\mathcal{C}: \mathbb{W} \longrightarrow \mathbb{N}$ maps timelines to their construction number. I.e. a macrostate of $\mathbb{W}$ is a subset of $\mathbb{W}$ containing all timelines with a certain construction number.
\end{definition}
Defining microstates is then much simpler than defining macrostates, in Def.~\ref{def:microstates}.
\begin{definition}
    \label{def:microstates}
    Every timeline in $\mathbb{W}$ is a \underline{microstate} of $\mathbb{W}$. Every microstate is in exactly one macrostate of $\mathbb{W}$.
\end{definition}

\subsection{Dynamics of the Model: Symmetries and State Transitions}
\label{sec:BasConsids}
I now present the rudiments and foundations of how my model functions dynamically.

The first parts have already been presented: the ontology and grouping into macrostates of $\mathbb{W}$. For the rest of this paper and the rest of the construction of this model, I will neglect microstates. Instead, I will only consider macrostates as all the information of interest herein -- that can be extracted from the model -- is contained in knowing what the probabilities of being in each macrostate are.

The key example of this disregard of microstates is how I define transitions between worlds. Notably, I only define transitions between macrostates. No transitions between microstates are ever considered.

Additionally, I assume that, for any fixed macrostate, the only possible transitions are to macrostates with construction numbers that differ from the construction number of the fixed macrostate by one. The transition rates for the possible transitions \emph{out} of the fixed state are all identical. I.e., the transitions probabilities are only non-zero between macrostates where the difference between their construction numbers is exactly one and, in this case, the probability of a transition only depends on the macrostate the transition is \emph{from}\footnote{Meaning if the construction number is increasing or decreasing does not affect the transition rate}.

Furthermore, the transitions from states with greater construction numbers are greater. This reflects how a greater construction number reflects a greater amount of time travel and hence a greater potential for acts that will cause a transition (i.e. changing the time line), e.g. a greater number of butterflies will flap their wings differently and cause different tornadoes. In this metaphor, those tornadoes will turn out to affect the number of butterflies.
The limit of this is when the construction number is zero, when the transition rate out of that macrostate is zero: there are no time machines and hence no way to change the timeline.

A final, informal, detail of my model is that I assume there is a maximum possible construction number a macrostate could possibly have. This limitation follows from the non-zero space and time requirements of constructing any machine, and the presumed finite amount of possible time in a timeline (due to the inevitability of the heat-death of the universe~\footnote{I assume that time-travel does nothing to counter-act the second law of thermodynamics; I cannot know this, but no new invention or technical innovation has threatened the second law since its discovery.})

\subsection{A Formal Statistical Model}
\label{sec:ModelStatement}
To formally define my model, I begin by constructing a Markov chain, where each state of the Markov chain is a macrostate (as defined in Def.~\ref{def:macrostates})~\footnote{I.e. the state space of the Markov chain is the set of all Macrostates} and the transition probabilities are proportional to the transition rates referred to above (scaling by the duration each time step lasts in the Markov chain). The aforementioned time step duration is then allowed to tend to zero, to obtain a continuous time Markov chain and lose the unphysical discretizations.

As in Def.~\ref{def:SetofMicros}, $\mathbb{W}$ is the set of all possible timelines I define $\mathcal{C}: \mathbb{W} \longrightarrow \mathbb{N}_0$ to map timelines to their construction number. Then, I define the states of the Markov chain in my model by: $\forall k \in \mathbb{N}$,
\begin{align}
    \mathbb{M}_k = \big\{ w \in \mathbb{W} \text{ } \vert \text{ }\mathcal{C} (w) = k \in \mathbb{N}_0^{\leq \mathcal{M}_{ax}} \big\},
\end{align}
where $\mathcal{M}_{ax}$ is the maximum possible construction number, referred to before.

Note that these are exactly the macrostates of $\mathbb{W}$.
\newline
As the central mathematical object of my model will begin as a discrete time Markov chain, I pause briefly to define both discrete-time Markov chain (in Def.~\ref{def:Markov}) and continuous-time Markov chains (in Def.~\ref{def:ConMarkov}).
\begin{definition}
    \label{def:Markov}
     A \underline{discrete-time Markov chain} is a sequence of random variables, $X_1, X_2, ..., X_N$, with each $X_j$ taking a value from the state space~\footnote{Which, in the model being constructed herein, is the set of all the macrostates.}.
    The move from one random variable in the sequence to the next is called a step and the probability distribution of $X_j$ depends only on the value of $X_{j-1}$. This property of the sequence is called the Markov property~\footnote{The Markov property can be pithily expressed as: the future is independent of the past, given the present.}.
    A particular discrete time Markov chain is specified by:
    \begin{enumerate}
        \item A state space, the set of possible values a variable can take.
        \item A transition matrix, defining the probabilities of moving from one given state (from the state space) to another in a single step. I.e. the probability distribution of $X_j$ given $X_{j - 1}$.
    \end{enumerate}
\end{definition}
I then treat continuous-time Markov chains informally, for the sake of readability and accessibility, as to treat them formally would raise the statistical complexity of this paper.
\begin{definition}
    \label{def:ConMarkov}
     A \underline{continuous-time Markov chain} can be viewed as a discrete-time Markov chain in the limit where the duration of each step goes to zero. I.e. it now consists of an infinite sequence of random variables, $X_1, X_2, ...$. Importantly, it still obeys the Markov property.
\end{definition}

\begin{figure}[h!]
\begin{center}
	\begin{tikzpicture}[->, >=stealth', auto, semithick, node distance=2.5cm]
	\tikzstyle{every state}=[fill=white,draw=black,thick,text=black,scale=1, minimum size = 3em]
	\node[state]    (A)                     {$\mathbb{M}_{j -1}$};
	\node[state]    (B)[right of=A]   {$\mathbb{M}_{j}$};
	\node[state]    (C)[right of=B]   {$\mathbb{M}_{j+1}$};
	\path
	(B) edge[bend left, below]	node{$\alpha_{j}$}	(A)
     (A) edge[bend left, above]	node{$\alpha_{j-1}$}	(B)
     (C) edge[bend left, below]	node{$\alpha_{j+1}$}	(B)
     (B) edge[bend left, above]	node{$\alpha_{j}$}	(C);
	\end{tikzpicture}
\end{center}
    \caption{A depiction of a subset of nodes in the Markov chain and the transition probabilities (exclusively) between them.\\
    Note that there are, in general, more nodes in this chain that are not shown but do have arrows to and from the nodes $\mathbb{M}_{j+1}$ and/or $\mathbb{M}_{j-1}$.}
    \label{fig:subsetOfMarkovChain}
\end{figure}

With Markov chains (both discrete and continuous) defined, and the states of the Markov chain in my model defined (as the macrostates of $\mathbb{W}$, defined via the construction number), the only aspect of my model left to specify is the transition probabilities between macrostates (which are defined via transition rates, defined below, to avoid problems when the time step length is allowed to tend to zero). 

All of the conditions required of the dynamics of the model (as described in Sec.~\ref{sec:BasConsids}) are captured by the below Def.~\ref{def:transitionDef}, which defines the transition rates, $\{ \alpha_j \}_{j = 0}^{\infty}$ (most explicitly in Eqn.~\ref{eqn:alphaJ}). For any macrostate with construction number $j \in \mathbb{N}_0$, $\alpha_j$ is the rate of transitions both:
\begin{enumerate}
    \item \emph{from} the macrostate with construction number $j \in \mathbb{N}$ \emph{to} the macrostate with construction number $j + 1 \in \mathbb{N}$
    \item \emph{from} the macrostate with construction number $j \in \mathbb{N}$ \emph{to} the macrostate with construction number $j - 1 \in \mathbb{N}$
\end{enumerate}
\begin{definition}
\label{def:transitionDef}
The transition rate associated with all transitions \emph{from} the macrostate with construction number $j \in \mathbb{N}$ is:
\begin{align}
    \label{eqn:alphaJ}
    \alpha_j
    &=
    \beta \dfrac{j}{2(j + 1)},
\end{align}
where $\beta \in \mathbb{R}^+$ and scales the transition probabilities uniformly across macrostates. Herein, it is left undefined to show it does not materially affect the results.
\end{definition}
I additionally note that the probability of the macrostate with the maximum possible construction number transitioning to a greater macrostate and the probability of the macrostate with transition number zero transitioning to a lesser macrostate is zero, although the other transitions of these macrostates is exactly as they would otherwise be.

\section{Dynamics of Timelines and Stable Timelines}
\label{sec:DynamicsTimelines}
\subsection{A Comparison to Special Relativity}
When looking at the dynamics of time travel and considering motion through time, a sensible starting point is special relativity. A key feature of special relativity is time dilation, where if two people start off together and one goes on a trip at relativistic speeds before reuniting with the other they will each have experienced different amounts of time.

I now ask a similar question. What if two people initially start off together before one goes off travelling in a time machine, while the other stays in place for some time $t \in \mathbb{R}^+$ (according to the stationary person), how long will it have been for the person who went off time-travelling?

The answer is surprisingly un-mathematical: as long as the traveller feels like taking! They could witness as many historical events as they feel like seeing before returning to meet their non-travelling friend.

I try to fit this idea -- conceptually -- into the model of special relativity wherein the traveller and non-traveller experience time passing at different rates due to the effect of their motion. So, the relativistic traveller experiences a factor -- based on their speed -- times as much time as their non-travelling friend (I ignore the twin paradox as general relativity gives the true answer). In this model, the traveller and their friend can be perceived as moving through spacetime at different angles, and the differing durations can be seen as the effect of projecting their path onto the stationary time dimension.

For a time-traveller, the factor that their perception of time is multiplied by (analogously to the above) is, due to them being able to spend as long a time as they want travelling before meeting their friend whenever they like, infinity.

The projection of the time that has passed for them onto their non-travelling friend's time dimension will always be zero and so I term the time that the traveller experiences orthogonal time. No amount of orthogonal time requires more than zero time (as measured by a non-travelling observer).

In fact, regardless of the motion (relativistic or otherwise) of the non-time-traveller, the traveller can always experience any desired about of orthogonal time before returning to meet their non-travelling friend any amount of non-traveller measured time later.

\subsection{Perceptions of the Effects of Time Travel for Non-Travellers}
The discussion of the preceding subsection may initially seem immaterial for our considerations, but when you consider how the actions of time-travellers may change our timeline with us in it this changes rapidly. As they may experience any amount of orthogonal time instantaneously (from our perspective), they may make infinitely many changes to the timeline instantly. If there are many time-traveller travelling about, the combined effect of the entire history (history in orthogonal time, that is) of all time-travel happens instantaneously from our perspective. The combined effects of the timeline being buffeted by changes these travellers intentionally or unintentionally make, changes being made and unmade, again and again, all happen in an instant from our non-travelling perspective. We jump to the final conclusion after an infinite amount of orthogonal time.

So, factoring in the discussion of orthogonal time in Sec.~\ref{sec:DynamicsTimelines}, the only timeline we should ever, as non-travellers, expect to see is the asymptotic limit of orthogonal time.
As the model takes place entirely in orthogonal time, in the upcoming analysis or when I run my simulations the most important feature will be what happens in the limit as orthogonal time tends to infinity. As the only time that means anything when considering the dynamics of timelines (and hence the one I have to use in my model) is orthogonal time. So the limit as the travellers orthogonal time tends to infinity is all non-travellers (i.e. us) will ever see. All the intermediate states of the timeline -- from changes that were undone or themselves changed -- are erased (or more accurately, are prevented from ever having happened) and only the asymptotic limit timeline remains.

\begin{Note}
    I pause to note that the ``time" in the Markov chains in the model I will construct is itself orthogonal time, as just defined.
\end{Note}

\section{Analytic Analysis of the Model}
\subsection{Reformulation as a System of Differential Equations}
I now aim to find the dynamics of the model constructed in the previous section. These dynamics take place as orthogonal time passes.
The most important quantities of the model are the probabilities of being in each state; to aid in the analysis of these quantities, I define some notation for them in the below Def.~\ref{def:PtjDef}. 
\begin{definition}
\label{def:PtjDef}
    Let \underline{$P_j^t \in [0,1]$} be the probability the Markov chain being considered is in the macrostate $j \in \mathbb{N}_0$ at time $t \in \mathbb{R}^+$.
\end{definition}
Def.~\ref{def:PtjDef} is used in Lemma~\ref{lem:asDiffEqn}, which derives a differential equation governing the dynamics of the above constructed model.
\begin{lemma}
\label{lem:asDiffEqn}
    The probabilities of being in each macrostate, $P_j^t$, of the Markov chain model presented in Sec.~\ref{sec:ModelStatement}, are governed by the differential equations:
    \begin{align}
        \label{eqn:diffInLemStatement}
         \dfrac{dP_j^t }{dt}
        &=
        \alpha_{j -1} P_{j-1}^{ t} + \alpha_{j +1} P_{j+1}^{ t} - 2\alpha_{j}P_j^{t},
    \end{align}
    and $\alpha_j$ is as in Def.~\ref{def:transitionDef}.
\end{lemma}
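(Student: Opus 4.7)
The plan is to derive the system of differential equations from the underlying discrete-time Markov chain by applying the Markov property to express the state probabilities at time $t + \Delta t$ in terms of those at time $t$, and then taking the step duration $\Delta t \to 0$ as prescribed in Def.~\ref{def:ConMarkov}. This is the standard derivation of a Kolmogorov forward (``master'') equation for a birth--death chain, which is exactly the class of chain under consideration here: by Def.~\ref{def:transitionDef} and the discussion in Sec.~\ref{sec:BasConsids}, the rates $\alpha_j$ are non-zero only between macrostates whose construction numbers differ by one.

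First, I would apply the law of total probability at time $t + \Delta t$, conditioning on the macrostate occupied at time $t$, and use the Markov property to drop all earlier history. Because only adjacent macrostates connect to $\mathbb{M}_j$ in a single step, only three terms survive: inflow from $\mathbb{M}_{j-1}$, inflow from $\mathbb{M}_{j+1}$, and the self-transition contribution from $\mathbb{M}_j$ itself. Writing each one-step transition probability as a rate times $\Delta t$, and the stay probability as one minus the sum of the two outgoing channel probabilities (precisely the expression displayed immediately after Def.~\ref{def:transitionDef}), I obtain a discrete recurrence for $P_j^{t + \Delta t}$ in terms of $P_{j-1}^{t}$, $P_j^{t}$, and $P_{j+1}^{t}$.

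Second, I would rearrange this recurrence so that $P_j^{t + \Delta t} - P_j^{t}$ stands alone on the left, divide through by $\Delta t$, and take $\Delta t \to 0$. The left-hand side then becomes $dP_j^t / dt$ by the definition of the derivative, while on the right-hand side the $\Delta t$ cancels and the $P_j^t$ coefficients collapse to the linear combination $\alpha_{j-1} P_{j-1}^{t} + \alpha_{j+1} P_{j+1}^{t} - \alpha_j P_j^{t}$ appearing in Eqn.~\ref{eqn:diffInLemStatement}.

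Finally, I would handle the two boundary macrostates. For $j = 0$, the term $\alpha_{j-1} P_{j-1}^{t}$ vanishes automatically because $\alpha_0 = 0$ by Def.~\ref{def:transitionDef}, so no auxiliary state $\mathbb{M}_{-1}$ is needed. For $j = \mathcal{M}_{ax}$, the remark immediately following Def.~\ref{def:transitionDef} forbids upward transitions out of the top state, so adopting the convention $P_{\mathcal{M}_{ax}+1}^{t} \equiv 0$ preserves the form of Eqn.~\ref{eqn:diffInLemStatement}. The main (mild) obstacle is not the limit itself but the careful bookkeeping of outgoing channels at interior versus boundary macrostates; no deeper probabilistic machinery is required.
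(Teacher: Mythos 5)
Your route is the same one the paper takes: discretize with step $\Delta t$, write the one-step balance for the occupancy probabilities, divide by $\Delta t$, and let $\Delta t \to 0$. You are in fact more explicit than the paper's own proof, which simply asserts the expression for the one-step change $\Delta P_j^t$ without spelling out the law of total probability or the treatment of the boundary macrostates $j = 0$ and $j = \mathcal{M}_{ax}$.

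There is, however, one step in your second paragraph that does not go through as written. You correctly take the stay probability to be one minus the sum of the \emph{two} outgoing channel probabilities --- the display immediately after Def.~\ref{def:transitionDef}, i.e. $1 - 2\alpha_j\Delta t$ --- while each inflow term comes from a \emph{single} channel (rate $\alpha_{j-1}$ for the upward transition out of $\mathbb{M}_{j-1}$, rate $\alpha_{j+1}$ for the downward transition out of $\mathbb{M}_{j+1}$). Carrying your own bookkeeping through therefore gives
\begin{align*}
\dfrac{dP_j^t}{dt} = \alpha_{j-1}P_{j-1}^{t} + \alpha_{j+1}P_{j+1}^{t} - 2\alpha_{j}P_j^{t},
\end{align*}
with a factor of two on the loss term that cannot be absorbed into the gain terms. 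The claimed ``collapse'' to the combination in Eqn.~\ref{eqn:diffInLemStatement} is thus not a consequence of the setup you describe: your derivation, done honestly, produces a different equation from the one in the lemma statement. (The paper's proof avoids confronting this only by writing $-\alpha_j P_j^{t-\Delta t}$ into the expression for $\Delta P_j^t$ by fiat.) To close the gap you must either derive the equation with $-2\alpha_j P_j^t$ and flag the mismatch with the lemma as stated, or argue that the $\alpha_j$ in the loss term denotes the total exit rate while the $\alpha_{j\pm 1}$ in the gain terms are per-channel rates --- which is not consistent with Def.~\ref{def:transitionDef} and the enumeration preceding it. The discrepancy is only a constant factor on the diagonal and does not alter the qualitative asymptotics, but as a proof of the displayed equation the final step of your argument fails.
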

\begin{proof}
    Assume the discrete steps of the model defined in Sec.~\ref{sec:ModelStatement} have a constant duration of $\Delta t$. Let $\Delta P_j^t$ be the change in the probability of being in state $j \in \mathbb{N}_0$ during the $t$th step. Therefore, $\Delta P_j^t$ can be expressed as: $\forall j \in \mathbb{N}_0$,
    \begin{align}
        \Delta P_j^t 
        &=
        \bigg[ \alpha_{j -1} P_{j-1}^{ t-\Delta t} + \alpha_{j +1} P_{j+1}^{ t-\Delta t} - 2 \alpha_{j}P_j^{ t-\Delta t}\bigg] \Delta t
    \end{align}
    Therefore, dividing through by $\Delta t$ and taking the limit as $\Delta t \longrightarrow 0$:
    \begin{align}
    \lim_{\Delta t \longrightarrow 0} \bigg(
        \dfrac{\Delta P_j^t }{\Delta t} \bigg)
        &=
        \lim_{\Delta t \longrightarrow 0} \bigg(
        \alpha_{j -1} P_{j-1}^{ t-\Delta t} + \alpha_{j +1} P_{j+1}^{ t-\Delta t} - 2 \alpha_{j}P_j^{ t-\Delta t}
        \bigg)\\
        \Rightarrow
        \dfrac{dP_j^t }{dt}
        &=
        \alpha_{j -1} P_{j-1}^{ t} + \alpha_{j +1} P_{j+1}^{ t} - 2 \alpha_{j}P_j^{t}.
    \end{align}
    This is exactly the differential equation in the lemma statement.
\end{proof}
I note that for Eqn.~\ref{eqn:diffInLemStatement} to hold for \emph{all} values of $j \geq 0$ (i.e. all states in the Markov chain) requires:
\begin{enumerate}
    \item I additionally -- purely so the variables in Eqn.~\ref{eqn:diffInLemStatement} are always defined -- define, $\forall t \in \mathbb{R}^+$, $\alpha_{-1} = P_{-1}^t = 0$ and $\alpha_{\text{Max} +1} = P_{\text{Max} +1}^t = 0$. 
    \item Implicitly, $\alpha_{\text{Max}}$ is the only transition rate that depends on the direction of the transition (i.e. if the construction number is increasing or decreasing), if the transition is to state $\text{Max}+1$, $\alpha_{\text{Max}} = 0$. Otherwise, it is as in Def.~\ref{def:transitionDef}.
    \item The system of differential equations derived in Lemma~\ref{lem:asDiffEqn} (and given as Eqn.~\ref{eqn:diffInLemStatement}) is as large as the model is, and so for large models is large. It is useful to note that all equations in the system are linear and that there is the additional requirement that: $\forall t \in \mathbb{R}^+$,
\begin{align}
    \sum_{j \in \mathbb{N}_0} \bigg( P_j^{t} \bigg)
    =
    1
    \iff
    \sum_{j \in \mathbb{N}_0} \bigg( \dfrac{\partial P_j^{t}}{\partial t} \bigg)
    =
    0.
\end{align}
\end{enumerate}
\subsection{Asymptotic Behavior of the Model}
\label{sec:asmpto}
For clarity, in Sec.~\ref{sec:asmpto}, instead of denoting the probability of being in state $j$ at time $t$ as $P^t_j$, I express it as $\mathbb{P}\big( X_{t} = 0 \big)$. This, I believe, makes the use of Bayes' theorem~\cite{doi:10.1098/rstl.1763.0053} clearer in the below.

Consider any two times $t' \in \mathbb{R}^+$ and $t \in \mathbb{R}^+$, such that $t'>  t$. 

I can express the probability that the model is in state $j \in \mathbb{N}_0$ at time $t'$ as:\\
the probability the model is already in state $j$ at time $t$\\
\textbf{plus}\\
the probability it moves into state $j$, from any other state, after time $t$ but before time $t'$\\
\textbf{minus}\\
the probability it moves out of state $j$ after time $t$ but before time $t'$.\\

When $j = 0$, this analysis is slightly simpler as the probability of moving out of state $0$ at any time is zero and therefore $\mathbb{P}\big( X_{t'} = 0 \big)$ (i.e. $P^{t'}_0$) can be expressed as: $\forall t' > t \in \mathbb{R}^+$,
\begin{widetext}
\begin{align}
    \label{eqn:widetextProbs}
    \mathbb{P}\big( X_{t'} = 0 \big)
    &=
    \sum_{j \in \mathbb{W}'} \bigg( \mathbb{P}\big( X_{t'} = 0 \text{ } \vert \text{ } X_{t} = j \big) \mathbb{P}\big( X_{t} = j \big)  \bigg)
    =
    \mathbb{P}\big( X_{t'} = 0 \text{ } \vert \text{ } X_{t} = 0 \big) \mathbb{P}\big( X_{t} = 0 \big)
    +
    \sum_{j \in \mathbb{W}' \backslash \{ 0\}} \bigg( \mathbb{P}\big( X_{t'} = 0 \text{ } \vert \text{ } X_{t} = j \big) \mathbb{P}\big( X_{t} = j \big)  \bigg).
\end{align}
\end{widetext}
In a slight diversion, consider that, as $\forall j \in \mathbb{W}' \backslash \{ 0\}, \alpha_j > 0$,$ \forall t', t \in \mathbb{R}$ such that $t' - t >0$, 
\begin{align}
    \forall j \in \mathbb{W}' \backslash \{ 0\}&, \mathbb{P}\big( X_{t'} = 0 \text{ } \vert \text{ } X_{t} = j \big) > 0.
\end{align}
This implies that, as, $\forall j \in \mathbb{W}' \backslash \{ 0\}$, $\mathbb{P}\big( X_{t} = j \big) \geq 0$ and $\sum_{j = 0}^{\text{Max}} \big( \mathbb{P}\big( X_{t} = j \big) \big) = 1$: $\forall j \in \mathbb{W}' \backslash \{ 0\}$, 
\begin{align}
    \label{eqn:ProbsArePos}
    \sum_{j \in \mathbb{W}' \backslash \{ 0\}} \bigg( \mathbb{P}\big( X_{t'} = 0 \text{ } \vert \text{ } X_{t} = j \big) \mathbb{P}\big( X_{t} = j \big)  \bigg) > 0,
\end{align}
unless  $\forall j \in \mathbb{W}' \backslash \{ 0\}$, $\mathbb{P}\big( X_{t} = j \big) = 0$.\\
Therefore, returning to Eqn.~\ref{eqn:widetextProbs} and using Eqn.~\ref{eqn:ProbsArePos}, in addition to the fact that $\mathbb{P}\big( X_{t'} = 0 \text{ } \vert \text{ } X_{t} = 0 \big) = 1$,
\begin{align}
    \mathbb{P}\big( X_{t'} = 0 \big)
    &>
    \mathbb{P}\big( X_{t} = 0 \big),
\end{align}
unless  $\forall j \in \mathbb{W}' \backslash \{ 0\}$, $\mathbb{P}\big( X_{t} = j \big) = 0$. I.e. unless $\mathbb{P}\big( X_{t} = 0 \big) = 1$.

Therefore, I conclude that the probability of being in the state with construction number zero increases strictly monotonically until it reaches one. This, in turn, implies:
\begin{align}
    \lim_{t \longrightarrow \infty} \bigg( \mathbb{P}\big( X_{t} = 0 \big) \bigg) = 1.
\end{align}
Returning to the notation of earlier sections, this can equivalently be expressed as:
\begin{align}
    \lim_{t \longrightarrow \infty} \bigg( P^{t}_0 \bigg) = 1,
\end{align}
and therefore, for all other $j \not = 0$,
\begin{align}
    \lim_{t \longrightarrow \infty} \bigg( P^{t}_j \bigg) = 0.
\end{align}
I.e., as the time elapsed in the simulation tends to infinity, the probability that the simulation is in the macrostate with construction number zero tends to one and the probability that the simulation is in any macrostate with non-zero construction number tends to zero.

\section{Numerical Analysis of the Model}
While it would be of interest to consider the path the probabilities of being in each macrostate take to their asymptotic limits, in the interests of accessibility and appealing to a wider audience, I do not analytically solve this system of differential equations. Instead, I solve it numerically to demonstrate how the solution behaves. In some ways, this is superfluous as I have already shown the asymptotics of the model, which is all that matters (as per Sec.~\ref{sec:DynamicsTimelines}). However, it is useful to confirm these results numerically and also to examine the dynamics of the model in reaching this asymptotic model.
\subsection{System with 5 Macrostates, Starting in Macrostate 3}
The first plot is simple, just to get the lay of the land. It is presented in Fig.~\ref{fig:firstPlot_3_5} and numerically solves the system of differential equations for a system with five states (i.e. four time machines at most) and starting in the state with three time machines.
\begin{figure}[h!]
    \includegraphics[width=\textwidth/2]{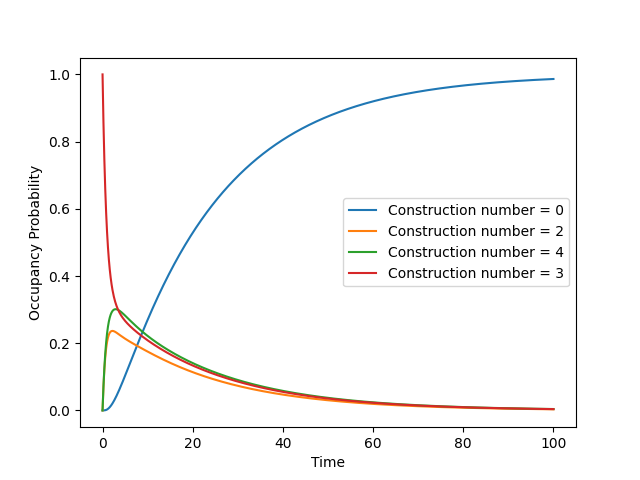}
    \caption{Occupancy probabilities as the Markov chain evolves of a five state Markov chain model, with the initial state being the one with a construction number of three.}
    \label{fig:firstPlot_3_5}
\end{figure}
From it, we can see that the probabilities of being in a state with a construction number other than zero decays to zero extremely quickly, while the probability of being in the state with a construction number of 0 quickly tends to 1. 

It is interesting to note that for macrostates other than the one with construction number zero or the initial macrostates, there is a brief increase in the probability of being in those states before it peaks and also decays extremely quickly.

\subsection{System with 500 Macrostates, Starting in Macrostate 3}
I then consider increasing the maximum allowed construction number. This is a more physical situation as -- in reality -- the maximum construction number will be very large. I do not change the initial state just yet, as I aim to isolate the single feature of the simulation I am changing. 

The results of this simulation are given in Fig.~\ref{fig:secondPlot_3_500}. The only change between the simulations in Fig.~\ref{fig:firstPlot_3_5} and Fig.~\ref{fig:secondPlot_3_500} is that the number of states has been increased from five to 500.
\begin{figure}[h!]
    \includegraphics[width=\textwidth/2]{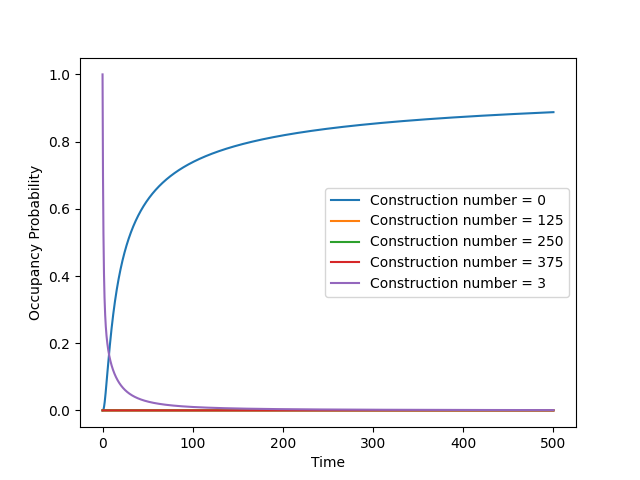}
    \caption{Occupancy probabilities as the Markov chain evolves of a 500 state Markov chain model, with the initial state being the one with a construction number of 3.}
    \label{fig:secondPlot_3_500}
\end{figure}

The first thing to note in Fig.~\ref{fig:secondPlot_3_500} is that the rate at which the probability of being in the state with construction number zero tends to one is much slower; the probability of being in the initial state (again, with a construction number of 3) still decays extremely quickly, which infers that the probability of being in non-initial states without a construction number of zero -- cumulatively -- stays larger for longer (although likely spread out among many of those states). As the initial macrostate is still relatively close to the zero macrostate, this infers that the probability of having a construction number greater than that of the initial macrostate actually increases initially before decaying.

\subsection{System with 500 Macrostates, Starting in Macrostate 100}
The final step to get a realistic Markov chain model is to increase the initial construction number (i.e. the construction number of the state that the Markov chain is initially in). This is more realistic as, although the initial macrostate is not fixed or derivable within my model, it is unlikely -- reasoning outside the model -- to be small, compared to the maximum possible construction number. Purely because relatively few of the macrostates have low construction numbers.

This simulation is shown in Fig.~\ref{fig:thirdPlot_100_500}.
\begin{figure}[h!]
    \includegraphics[width=\textwidth/2]{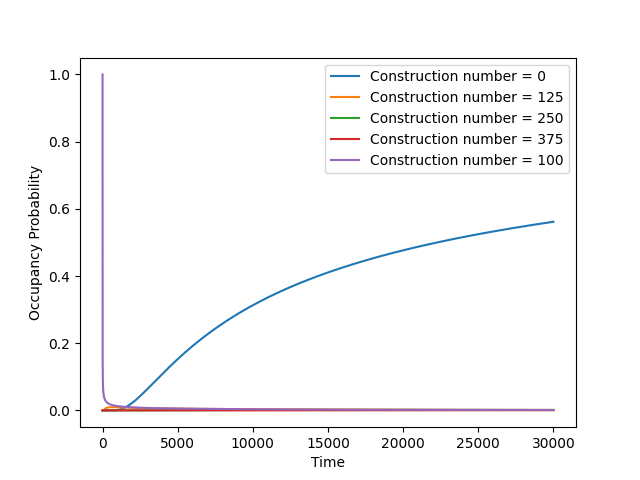}
    \caption{Occupancy probabilities as the Markov chain evolves of a 500 state Markov chain model, with the initial state being the one with a construction number of 100.}
    \label{fig:thirdPlot_100_500}
\end{figure}
It now, in Fig.~\ref{fig:thirdPlot_100_500}, takes much much longer for the probability of the Markov chain being in the state with a construction number of 0 to get close to one. This is the most striking consequence of this change.

This is likely due to the time it takes for the probability to percolate down -- through all the intermediate states -- to the state with construction number 0, but that \emph{is} the direction the probability -- overall -- flows in. The system still, at very long time-scales, behaves in the same way it does when the initial construction number is much lower. 

Another striking feature is how precipitous the decay of the probability of being in the initial state is. This is even faster than had been the case for when the initial construction number is lower.

A final thing to note in Fig.~\ref{fig:thirdPlot_100_500} is that states that are both not the initial state and do not have construction number one barely have any probably of being the state the Markov chain model is in at any time.

Further simulations are provided -- without commentary -- in Appendix~\ref{sec:MoreExamplesAppend}.
The code used to generate all the plots can be found on my Github~\cite{MyGithubCitation}.

\subsection{Interpretation of Results}
\label{sec:interpret}
The first key feature of the above simulations is that their asymptotic behavior is consistent, regardless of the number of states in the Markov chain model or the initial state of the model.
But the most striking and important feature of the above simulations is that, without exception, eventually the probability of being in the macrostate with construction number zero reaches one. All other macrostates eventually have probability zero.

Therefore, I conclude that, assuming my model, time travel is self-suppressing: the timeline is continually rewritten until it inevitably reaches a timeline with no time machines ever being constructed. At this point, no further changes to the timeline are possible. 

\section{Comparison to Other Suggested Explanations}
This paper is not the first to suggest an explanation for the lack of perceived time-travellers.
In fact, a variety of theories have been proposed to reconcile the theoretical permissibility of time travel with the absence of observed time-travellers. However, most assume physical or logical to time-traveller us meeting time travellers. 

The most obvious solution to us not seeing travellers from the future is also the least interesting: time travel may not be possible for technical reasons (i.e. physics does not permit travel backwards in time\footnote{Although results already exist to indicate reversing the flow of time is possible~\cite{jackson2023accreditation, jackson2025improvedaccreditationanaloguequantum}.}). It may simply not be possible to travel through time. This may be the true solution -- Occam's razor would suggest so -- but it is not definitively the case so other explanations are explored below and compared to the model originated herein.

Ref.~\cite{deutsch1991quantum} introduced a quantum model of closed timelike curves (i.e. paths through space-time where you can end up at an earlier point in time than where you started) where paradoxes are resolved through a consistency condition: time travel is allowed, but only if the events it causes are self-consistent. Similarly, Novikov's self-consistency conjecture~\cite{novikov1990self} also permits time travel under the constraint that any actions taken by time-travellers must preserve the timeline’s consistency (i.e. not cause paradoxes). While these models avoid paradoxes, which could imply time-travel is impossible or make it less likely, they do not explain the \emph{complete} absence of observable travellers unless \emph{any} notice of time-travellers causes us -- as a species -- to never develop time-travel.

Ref.~\cite{hawking1992chronology} instead argues that physical laws would prevent the formation of closed timelike curves altogether. This places the prohibition at the level of physical impossibility, unlike my model, which assumes no such prohibitions.

More recently, Ref.~\cite{greenberger2020time} considered both quantum mechanics and the many-worlds interpretation of it~\cite{sep-qm-manyworlds} to suggest that paradoxes are resolved across branching timelines. While this avoids logical contradictions, which may suppress time-travel, it does not predict the systematic complete erasure of time travel.

By contrast, the model developed in this paper introduces no physics-based barriers to time travel. Instead, it shows that time travel -- by enabling timeline alterations -- induces a dynamic instability that -- with very high probability -- leads to its own erasure. This self-suppressing mechanism results in the asymptotic convergence of all timelines toward states in which no time machines ever exist.
In some ways, in my model, the construction number of a macrostate can be seen as analogous to the energy of a thermodynamic macrostate of a physical object and how such systems have a tendency to end up in their lowest-energy state, like how my model ends in its lowest  construction number state.

\begin{Note}
    After the first version this paper had been uploaded to arXiv, I was made aware that Larry Niven had suggested an informal version of the conclusion of this paper in Ref.~\cite{allMyriad}, which he terms Niven's law:\\
    ``In the universe of discourse  permits the possibility of time travel and of changing the past, then no time machine will be invented in that universe."
    
\end{Note}

\section{Conclusions and Discussion}
As mentioned in Sec.~\ref{sec:interpret}, the key result of this paper is the asymptotic results of the model constructed herein: eventually, the model ends up -- and stays -- in a state with zero time machines (i.e. in the macrostate with construction number zero).

As was argued in Sec.~\ref{sec:DynamicsTimelines}, the asymptotic results are all that matters to those of us not travelling, as the continual re-writing of the timeline is imperceptible to us and we ``jump" straight to the asymptotic limit. This and the above mentioned asymptotic behaviour of the constructed model, mean that we should not expect to receive visitors from the future.   

Therefore, even if time travel is not technically forbidden by physics or mathematics or metaphysics, we should not expect to see time-travellers. Nor should we expect time travel to be an anticipated aspect of future societies, unless the assumptions made herein can be evaded in some way.

It is note-worthy that the model constructed herein has a free parameter, $\beta$, which scales the transition rates; this was shown analytically to not affect the asymptotic, i.e. most important, behavior of this model (assuming $\beta \not = 0$). It is therefore immaterial. However, it is possible that this parameter could, by caution or some other factor, be driven extremely low, but this does not affect the asymptotic behaviour unless $\beta$ is driven to exactly zero.

I do not necessarily preclude $\beta$ being zero: its not impossible to imagine future societies or nefarious factions striving to change the past and maintain their changes by then enforcing the condition that $\beta = 0$.
To this end, it is important to note that the value of $\beta$ is not solely determined by physics; it is more of a sociological measure, so it is theoretically possible for some regime to keep the value of $\beta$ low -- if not zero. 

I now turn to discuss the assumptions of this model and some of their potential failings. Most prominent, to me, of these assumptions -- that may fail -- is the assumption that the orthogonal time can or will tend to infinity. No car can run forever, so why should we expect a time machine to?

To consider the future of machine lifespans, it is useful to first consider the past. How has the useful lifetime of large complex machines changed?
Perhaps the closest analogy is cars: cars have -- in the last 50 years -- had their lifetimes extended significantly~\cite{Ford_2012, Harley_2023}, as technology has improved to enable them to keep running longer.

If these trends continue, in 1000 years or so, complex machines will have lifespans so long that they can be well approximated by saying they are infinite. This does not factor in potential breakthroughs in technological maintenance (e.g. mimicking how biological systems can repair and maintain themselves, allowing them to run indefinitely), just incremental improvements based on historical patterns.

There is also the possibility that time machines can return to their original time for maintenance or to be cannibalized to build new time machines (the main argument for construction numbers being limited by some upper bound was limitations on resources and time, but a working time machine mitigates both these issues).
Assuming this is possible, provided that where/when it was constructed is still in the timeline, a time machine can always return to get repairs and new crew and whatever other resources are needed. Depending on your view on paradoxes, the same resources can possibly be used repeatedly by re-writing the timeline to keep obtaining the resources.

The model providing these results is, admittedly, speculative but its only purpose is to provide an alternative to a lack of recorded time-travellers precluding the possibility of time-travel. Ultimately, this is panacea to most issues that may be found with this paper: its just a potential solution, intended to provide an alternative to concluding that the lack of observed time-travellers totally precludes the possibility of time travel.
It can also be viewed as just an approximation of what may happen; perhaps future work will relax the assumptions to deal with any issues you find?

\section{Acknowledgements}
I would like to thank James Lyons-Weiler, for extensive comments on the first version of this paper, and James Smirle, for making me aware of Niven's Law. 

\newpage
\bibliography{References}

\onecolumngrid
\newpage
\appendix

\section{More Examples of Solutions of the System of Differential Equations}
\label{sec:MoreExamplesAppend}
\subsection{Markov Chain Model with 1000 States and an Initial Construction Number of 50}
\begin{figure}[H]
    \begin{center}
    \includegraphics[width=0.6\textwidth]{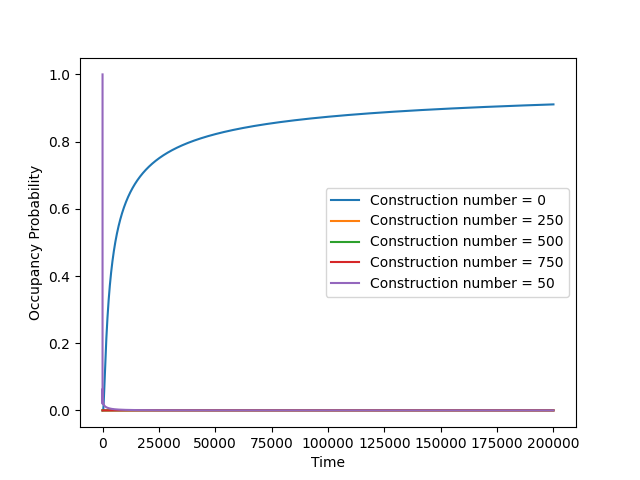}
     \end{center}
    \caption{Occupancy probabilities as the Markov chain evolves of a 1000 state Markov chain model, with the initial state being the one with a construction number of 50.}
    \label{fig:appenFirstPlot_50_1000}
\end{figure}
\subsection{Markov Chain Model with 400 States and an Initial Construction Number of 399}
\begin{figure}[H]
    \begin{center}
    \includegraphics[width=0.65\textwidth]{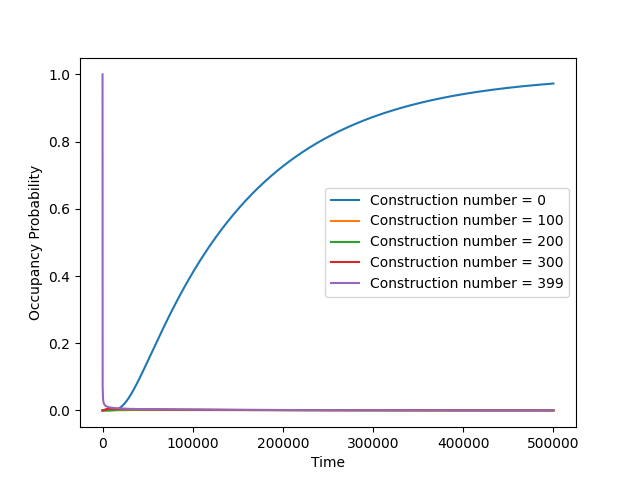}
    \end{center}
    \caption{Occupancy probabilities as the Markov chain evolves of a 400 state Markov chain model, with the initial state being the one with a construction number of 399.}
    \label{fig:appenSecondPlot_399_400}
\end{figure}
\subsection{Markov Chain Model with 200 States and an Initial Construction Number of 20}
\begin{figure}[H]
    \begin{center}
    \includegraphics[width=0.65\textwidth]{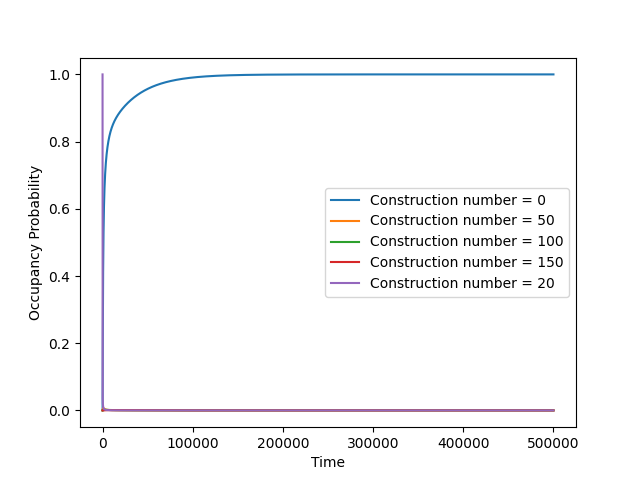}
    \end{center}
    \caption{Occupancy probabilities as the Markov chain evolves of a 200 state Markov chain model, with the initial state being the one with a construction number of 20.}
    \label{fig:appenSecondPlot_20_200}
\end{figure}
\subsection{Markov Chain Model with 100 States and an Initial Construction Number of 0}
\begin{figure}[H]
    \begin{center}
    \includegraphics[width=0.65\textwidth]{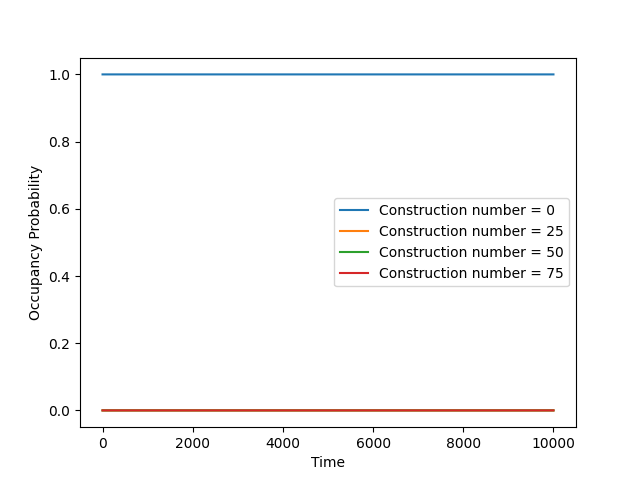}
    \end{center}
    \caption{Occupancy probabilities as the Markov chain evolves of a 100 state Markov chain model, with the initial state being the one with a construction number of 0.}
    \label{fig:appenSecondPlot_0_100}
\end{figure}
\subsection{Markov Chain Model with 10 States and an Initial Construction Number of 5}
\begin{figure}[H]
    \begin{center}
    \includegraphics[width=0.65\textwidth]{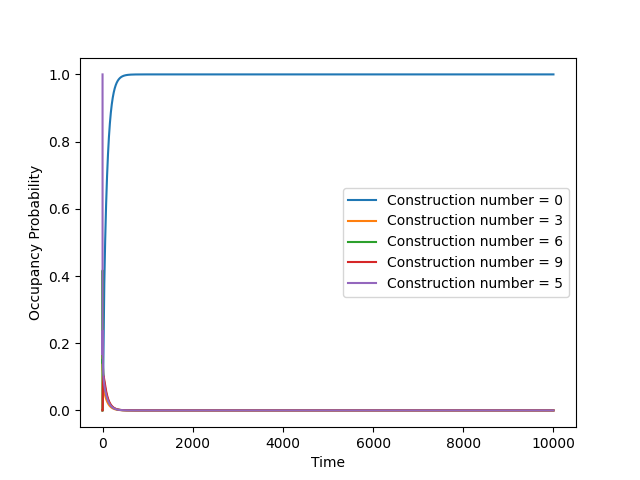}
    \end{center}
    \caption{Occupancy probabilities as the Markov chain evolves of a 10 state Markov chain model, with the initial state being the one with a construction number of 5.}
    \label{fig:appenSecondPlot_5_10}
\end{figure}
\begin{figure}[H]
    \begin{center}
    \includegraphics[width=0.65\textwidth]{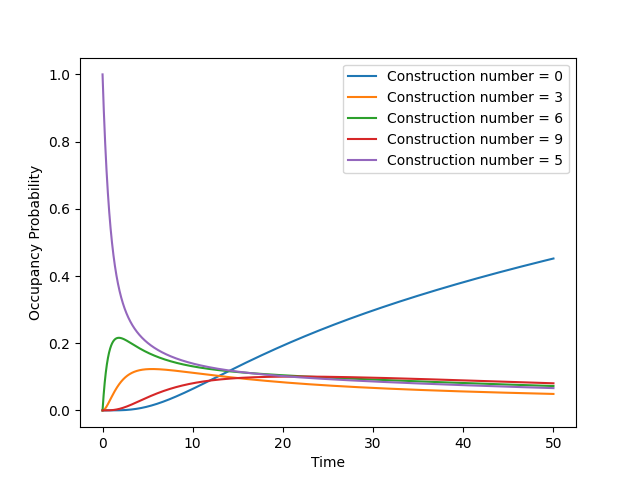}
    \end{center}
    \caption{Occupancy probabilities as the Markov chain evolves of a 10 state Markov chain model, with the initial state being the one with a construction number of 5. Plotted over a shorter timeframe}
    \label{fig:appenSecondPlot_5_10_SHORTER}
\end{figure}
\end{document}